\newcommand{\defeq}{\mathrel{\triangleq}}
\newcommand{\Bb}{\mathbf{b}}
\newcommand{\ab}{\mathbf{a}}
\newcommand{\bb}{\mathbf{b}}
\newcommand{\vb}{\mathbf{v}}
\newcommand{\V}{\mathbf{v}}
\newcommand{\Sb}{\mathbf{s}}
\newcommand{\Q}{\mathbf{q}}
\newcommand{\R}{\mathbf{r}}
\newcommand{\Pb}{\mathbf{p}}
\newcommand{\ceil}[1]{\lceil{#1}\rceil}
\newcommand{\iid}{i.\@i.\@d.\ }
\newtheorem{lemma}{Lemma}
\newtheorem{theorem}[lemma]{Theorem}
\theoremstyle{definition}
\newtheorem{egdummy}{Example}
\newenvironment{example}[1][]
{
	\begin{egdummy}[#1]
		\upshape
		
	}
	{
		\qed
	\end{egdummy}
}
\newtheoremstyle{myremark}
{\topsep}{\topsep}{\normalfont}{\parindent}{\itshape}{:}{ }{}
\theoremstyle{myremark}
\newcounter{Remark}
\newenvironment{Remark}
{
	\refstepcounter{Remark}
	\textbf{Remark \theRemark:}
}
\newcounter{Algorithm}
\newcommand\shortintertext[1]{
	\ifvmode\else\\\@empty\fi
	\noalign{
		\penalty0
		\vbox{\mathstrut}
		\penalty10000
		\vskip-\baselineskip
		\penalty10000
		\vbox to 0pt{
			\normalbaselines
			\ifdim\linewidth=\columnwidth
			\else
			\parshape\@ne
			\@totalleftmargin\linewidth
			\fi
			\vss
			\noindent#1\par}
		\penalty10000
		\vskip-\baselineskip}
	\penalty10000}
\def\endthebibliography{
	\def\@noitemerr{\@latex@warning{Empty `thebibliography' environment}}
	\endlist
}
\title{Beyond Yao's Millionaires: Secure Multi-Party Computation of Non-Polynomial Functions}
\author[]{Seyed Reza Hoseini Najarkolaei}
\author[]{Mohammad Mahdi Mojahedian}
\author[]{Mohammad Reza Aref}
\affil[]{\footnotesize Information Systems and Security Lab. (ISSL), Sharif University of Technology, Tehran, Iran
\\hoseini.education@gmail.com, m.mojahedian@gmail.com, aref@sharif.edu}
\begin{document}

\maketitle

\begin{abstract}
In this paper, we present an unconditionally secure $N$-party comparison scheme based on Shamir secret sharing, utilizing the binary representation of private inputs to determine the $\max$ without disclosing any private inputs or intermediate results. Specifically, each party holds a private number and aims to ascertain the greatest number among the $N$ available private numbers without revealing its input, assuming that there are at most $T < \frac{N}{2}$ honest-but-curious parties. The proposed scheme demonstrates a lower computational complexity compared to existing schemes that can only compare two secret numbers at a time. To the best of our knowledge, our scheme is the only information-theoretically secure method for comparing $N$ private numbers without revealing either the private inputs or any intermediate results. We demonstrate that by modifying the proposed scheme, we can compute other well-known non-polynomial functions of the inputs, including the minimum, median, and rank. Additionally, in the proposed scheme, before the final reveal phase, each party possesses a share of the result, enabling the nodes to compute any polynomial function of the comparison result. We also explore various applications of the proposed comparison scheme, including federated learning.
\end{abstract}

\section{Introduction}
\label{section:introduction}
%N party computation secure infor theoretical. we will focus on max computation as an example. non polynomial function eval as well.
Given the substantial surge in data generated by diverse sensors and devices, processing and analyzing such vast amounts of data on a single machine has become unfeasible. To address this challenge, distributed computing techniques have been devised, involving the partitioning of computation tasks into smaller sub-tasks that are then outsourced to distributed processing nodes \cite{van2002distributed}. However, when computation tasks are delegated to distributed nodes, ensuring the privacy of data becomes a critical concern. Secure multi-party computation (MPC) techniques have been employed to execute computations across multiple distributed workers while safeguarding data privacy \cite{goldreich1998secure}. MPC enables a set of nodes to compute a joint function of their local data without compromising the confidentiality of their private inputs. Since its inception by Yao in \cite{yao1982protocols}, MPC has found applications in various domains such as secret sharing \cite{shamir1979share}, secret multiplication \cite{ben1988completeness}, large-scale matrix multiplication \cite{najarkolaei2020coded,nodehi2018entangled}, voting \cite{najarkolaei2022information}, and more.

A pivotal component in MPC is secure comparison, often referred to as the millionaires' problem. The objective is to compare two secret numbers without disclosing any information about them except for the ultimate result. Secure comparison finds applications in diverse problem domains, including secure online auctions \cite{damgard2008homomorphic}, machine learning and data mining \cite{bost2014machine}, and secure sorting algorithms \cite{dehghan2020secure}. For instance, consider the following scenarios:

\begin{enumerate}
	\item Assume that $N$ parties participate in an auction and the auctioneer wants to sell the item to the highest bidder, however, due to the privacy constraints, the parties are not willing to announce their bids except the winning one.
	
	\item Assume that a federal agency wants to sort the total wealth of different companies to publish it for the investors. However, these companies do not want to reveal the exact wealth to one another.
	
	\item In a federated learning system, different data owners (clients) compute their local gradients, and the central server wants to compute the coordinate-wise median, known for its robustness against malicious data owners, without revealing the actual gradients or model updates from individual clients.
	
	%\item Two people have the same position in a company. They want to find out that whether their salaries are equal to each other or not. But they are not willing to tell the salary to any one.
	
	%\item Consider $n_1$ students from school $S_1$, $n_2$ students from school $S_2$, ...., and $n_K$ students from school $S_K$. Assume that all of the students are interested in finding the school with the best education system. The comparison parameter is that in which of the schools the minimum GPA is higher than the others.
	
	%\item A company wants to hire ten college graduates students based on their GPA. How would the company find the top 10 candidates without gaining any information about the GPA of all of the candidates?
\end{enumerate}

These scenarios are few from a lot of examples where comparison, sorting, $\max$ and median computations are needed. In fact, the basis of all these is the multi-party comparison. To this end, in this work, we study how to compare two or more unknown inputs in an information-theoretically secure manner. 

Most of the schemes in the literature are computationally secure, i.e., if we have an adversary with high computation power or storage capacity (which may be possible in the near future with the developments in quantum computers), it can violate privacy. 
There are few unconditionally secure comparison protocols. In \cite{david2015efficient}, an unconditionally secure comparison protocol was proposed. This protocol employs a commodity-based model, with the requirement that a trusted initializer distributes correlated randomness (commodities) to the parties before the start of the protocol execution. In \cite{damgaard2006unconditionally}, the first bit-decomposition based unconditionally secure comparison is proposed, albeit with significant computational complexity. The approach presented in \cite{nishide2007multiparty} introduces some unconditionally secure operations, such as \emph{less than, bit decomposition}, and \emph{equality test}, with lower complexity compared to the method in \cite{damgaard2006unconditionally}. Another recently proposed unconditionally secure comparison, called Rabbit \cite{makri2021Rabbit}, exhibits lower complexity compared to the algorithms presented in \cite{damgaard2006unconditionally, nishide2007multiparty}. All of these schemes \cite{damgaard2006unconditionally, nishide2007multiparty, makri2021Rabbit} (utilized as benchmarks) can be applied for the comparison of just \emph{two secrets} shared among $N$ parties. Specifically, when comparing more than two numbers, one must iteratively apply these schemes for pairwise comparisons, potentially leading to information leakage about the private inputs. Moreover, these schemes demand significant computational power and communication capacity. Additionally, they cannot be employed for further computations on the final result without risking information leakage.

%Similar to the existing secure comparison literature, most of the online auction protocols are only computationally secure.

\subsection{Our Contributions}

Motivated by these, in this work, we propose an unconditionally-secure multi-party comparison and non-polynomial function evaluation scheme.
In this paper, we introduce partition and $0$-coded vector of the private inputs, which enable the other parties to compare the secret inputs with lower complexity compared to the conventional scheme and in a straightforward way, while satisfying strong information-theoretic privacy guarantee. We also propose a new method for equality test  and combine these schemes to be able to compute the $\max$ of more than two secrets and other non-linear functions. To illustrate our algorithm concisely, each party first computes its binary representation of its private number, then constitutes vectors of this binary representation called a partition vector and a $0$-vector. Using a theorem that we will prove, comparing two numbers is equivalent to the existence of only one zero in the difference of the partition representation of one number and the $0$-vector representation of another number. We then use Shamir's secret sharing to distribute partition and $0$-vector representations among $N$ parties in a way that nobody can gain any information about the private numbers. Our main contributions are listed as follows. Our main contributions are listed below.

\begin{enumerate}
	\item \textbf{Security:} The proposed scheme is unconditionally secure against any subset of $T< N/2$ colluding nodes even if these nodes have unbounded computation power and storage.

	\item \textbf{Novelty in coding the inputs:} In this paper, given a number $s$ in the finite field $\mathbb{F}_q$, we begin by constructing its binary representation. Subsequently, based on this binary representation, we introduce $0$-coded and partition vector representations. These representations facilitate the introduction of a novel coding method, enabling nodes to perform comparisons and compute non-polynomial functions, such as $\max$ and median, with low complexity while maintaining privacy.
	
	\item \textbf{Simplicity and Efficiency:} Our scheme can be implemented in a straightforward way. In addition, the complexity of the proposed scheme is much lower than the existing unconditionally secure comparison methods.
	
%	\item The proposed scheme is robust against node failure. If less than a threshold, $T$, of the nodes drops, the algorithm will not fail.
%	\item The proposed scheme is scalable. At each step of the protocol, if some new nodes join the protocol, the existing parties just send a function of their private input to the new nodes, without modifying the previous shares sent to the other existing parties.

	\item \textbf{Computing other useful functions:} In this paper, we study how to compare two or more numbers in an unconditionally secure manner. In addition to comparison, we show how to privately compute $\max$ function of more than two secrets, $\mathrm{median}$ of the secret inputs, equality test, $\min$ function, etc. 
	
	\item \textbf{Computing non-linear functions:} The proposed scheme is based on Shamir secret sharing \cite{shamir1979share}. Thus, we can use the proposed algorithm to compute non-polynomial functions of the secret inputs. For example, It is shown that median of gradient vectors is a good approximation of correct direction in federated learning. Also, gradient vectors which are far from the median can be detected as outliers. So, it is important to compute the distance of each gradient vectors from median ($|g_i - \mathrm{median}(\{g_k\}_{k=1}^N)|^2$, $\forall i \in [N]$) and remove some of them as outliers, where $\{g_k\}_{k=1}^N$ are gradient vectors received from the all of the nodes.
	One can see that we cannot compute these kind of functions with the conventional multi-party computation protocols. To the best of our knowledge, this scheme is the only information-theoretic secure solution that can compute non-polynomial functions of the private inputs.
	%\item We can extend this scheme to the case that some of the parties are malicious adversaries. The extended scheme is also robust against at most $T$ node failure.
\end{enumerate}

\subsection{Related Works}

The secure comparison problem was first studied by Yao in \cite{yao1982protocols} by using cryptographic tools. Yao's solution is exponential in time and requires large memory which makes it impractical. After Yao's initial attempt to solve the secure comparison problem, various protocols have been proposed to reduce the computation cost, required memory, and communication load \cite{beaver1989multiparty,chaum1987multiparty,grigoriev2014yao,ioannidis2003efficient,blake2004strong,lin2005efficient,liu2017efficient,abspoel2019fast,yang2009efficient,chong2019circular,liu2014multi,li2014efficient,jiang2020semi,damgaard2006unconditionally,nishide2007multiparty,garay2007practical,shundong2008symmetric,damgaard2007efficient,damgard2008homomorphic,damgaard2008correction,kolesnikov2009improved,veugen2022lightweight,damle2019practical,david2015efficient,makri2021Rabbit,lipmaa2013secure,toft2011sub,yu2012probabilistically}. These protocols employ various approaches to address the secure comparison problem, which we briefly summarize in the following. %References \cite{beaver1989multiparty} and \cite{chaum1987multiparty} use multi-party circuit computation to compare the secret numbers. Grigoriev et al. proposed a comparison protocol in \cite{grigoriev2014yao} based on some classical physics laws. Their method cannot be applied to the online setting due to the time consuming nature of their protocol.

\subsubsection{Garbled Circuit}
Garbled circuit is a cryptographic tool that enables two parties to compute a function of their private inputs in collaboration with each other without utilizing any trusted third parties and revealing their private data \cite{yao1986generate, micali1987play}. Particularly, garbled circuits enable secure comparison when the desired function can be described as a Boolean circuit, e.g., comparison, summation. Many works utilize garbled circuits in performing secure comparison \cite{kolesnikov2009improved, huang2011faster,huang2011efficient,zahur2015two}.

\subsubsection{Quantum-Based Comparison}

The first quantum private comparison (QPC) protocol was proposed in \cite{yang2009efficient}. Many works including \cite{chong2019circular,liu2014multi,li2014efficient} utilize QPC protocols. One can see that these quantum-based protocols require all nodes to have full quantum capability and a trusted third party. Reference \cite{jiang2020semi} proposed a semi-quantum private comparison protocol based on Bell states and a trusted semi-honest quantum third party. In this setting, authors in \cite{jiang2020semi} assume that the third party has quantum capability and not collude with other participants.

\subsubsection{Homomorphic Encryption}

Many works have implemented secure comparison using homomorphic encryption \cite{damgaard2007efficient,damgard2008homomorphic,damgaard2008correction,blake2004strong,lin2005efficient,liu2017efficient,kolesnikov2009improved,damle2019practical}. 
The proposed schemes in \cite{blake2004strong,lin2005efficient} are single-round solutions that use Paillier homomorphic encryption and zero-knowledge proof to preserve the privacy. %it linearly depends on the length of the secrets. 
Recently, another protocol based on Paillier encryption was proposed in \cite{liu2017efficient}, which uses vectorization of the private input.
\cite{veugen2022lightweight} introduced a new secure comparison protocol that uses homomorphic encryption. This protocol is lightweight, and needs lower computation power compared to \cite{damgaard2007efficient,damgard2008homomorphic,damgaard2008correction,kolesnikov2009improved}. The advantage of the scheme in \cite{veugen2022lightweight} is that it does not need intermediate decryption, which is computationally expensive.
%In \cite{damle2019practical}, a collusion resistance with the complexity of $O(1)$ per comparison was proposed that uses Pedersen commitment, but the protocol needs a central server.

\subsubsection{Arithmetic Black-Box}
Another set of comparison schemes are built on top of arithmetic black-box (ABB) \cite{cramer2001multiparty}. The arithmetic black-box enables some parties to securely store and reveal secrets, and perform arithmetic operations on these secrets. From a functionality point of view, ABB can be thought of as a trusted third party, which stores elements of the field as well as performs arithmetic computations on them. 
ABB can be implemented by using varying tools such as oblivious transfer \cite{even1985randomized} or homomorphic encryption \cite{paillier1999public}. There are many protocols in this category including \cite{lipmaa2013secure,toft2011sub,yu2012probabilistically} and they vary greatly in structure.

\subsubsection{Bit-Wise Operations}
Another line of secure comparison protocols are based on bit-wise operations and bit decomposition \cite{damgaard2006unconditionally,nishide2007multiparty,garay2007practical,ioannidis2003efficient,makri2021Rabbit}. 
A two-round protocol that uses complex bit-wise operations was proposed in \cite{ioannidis2003efficient}. The proposed scheme in \cite{ioannidis2003efficient} is polynomial in time and communication.
The schemes in \cite{damgaard2006unconditionally,nishide2007multiparty,garay2007practical} are based on bit decomposition of the secret and apply some Boolean operations.
Computing bit decomposition of a secret is computationally expensive. Recently, a new method called Rabbit is proposed \cite{makri2021Rabbit}. The Rabbit scheme is based on bit-wise operation without computing bit decomposition of the secret, which makes it faster than conventional schemes. However, this scheme also uses a third party as an ABB for two secrets multiplication.
%In \cite{shundong2008symmetric}, a protocol based on symmetric cryptography was proposed. 
We note that all of these proposed protocols in \cite{damgaard2006unconditionally,nishide2007multiparty,garay2007practical,ioannidis2003efficient,makri2021Rabbit} are for two-secret comparison and they cannot be extended to the computation of $\max$ function of more than two secrets or computing a non-polynomial function of the secrets, such as median, unlike our proposed secure comparison technique.

It is worth noting that this paper aligns with category 5 of secure comparison algorithms. Within this category, we compared our algorithm with those introduced in \cite{damgaard2006unconditionally,nishide2007multiparty,makri2021Rabbit} in terms of complexity.

The rest of the paper is organized as follows: In Section \ref{section:problemsetting}, we introduce the problem setting and describe some technical preliminaries that we use. Section \ref{section:proposedcomparison} presents the proposed secure comparison scheme along with its application to $\max$ computation. In Section \ref{section:application}, we discuss further applications of the proposed secure comparison scheme. Finally, Section \ref{section:conclusion} concludes the paper.

\textbf{Notation.} In this work, we represent vectors with lowercase bold letters. The element-wise multiplication of two vectors $\ab$ and $\bb$ is denoted by $\ab\odot\bb$. We use $[N]$ to represent the set $\{1,2,\dots,N\}$. The transpose of a vector $\vb$ is represented by $\vb^\intercal$. Sets are shown using calligraphic font.

%%%%%%%%%%%%%%%%%%%%%%%%%%%%%%%%%%%%%%%%%%%%%%%%%%%%%%%%%%%%%%%%%%%%%%%%%%%

\section{System Model and Preliminaries}
\label{section:problemsetting}

%\subsection{System Model}
%\label{subsection:systemmodel}

We consider a decentralized secure comparison system with $N$ nodes. Each node is connected to every other other node by a point-to-point private link. Each party has a private number $s^{(i)}$, $i \in [N]$, from a field $\mathbb{F}_q$, where $q$ is a prime number. Each node in the system wants to find the greatest number among these $N$ private numbers without revealing its private input. That is, our goal is to compute the maximum among these $N$ numbers that are distributively held by $N$ nodes without leaking any information beyond the final result. In our model, up to $T$ of the nodes are semi-honest, where $T < \frac{N}{2}$. Semi-honest nodes follow the protocol but may collude with each other to gain some information about the private inputs of the others. 
% info theoretically secure, side benefit function evaluation

To securely compute the maximum number, in the proposed scheme, each node $i$ sends a function of its private number, shown by $F_j^{(i)}\left(s^{(i)}\right)$, to each other node $j$, where $F_j^{(i)}: \mathbb{F}_q \rightarrow \mathbb{F}_q$. Then, nodes perform certain computation tasks in collaboration with each other. Finally, each node sends a share of the final result to all of the other nodes to allow them to compute the final result of the comparison. Before we present the proposed scheme, we give an overview certain preliminaries that we utilize in our algorithm.

%\section{Preliminaries}
%\label{section:preliminaries}

\subsection{Shamir Secret Sharing}
\label{subsection:shamirsecretsharing}

Assume that a node, called \emph{dealer}, wants to share a secret among some nodes such that any subset of at least $T$ nodes can recover the actual secret $s$, while any subset of fewer than $T$ nodes cannot gain any information about the secret. The first secret sharing scheme was proposed in 1979 by Shamir \cite{shamir1979share} and Blakley \cite{blakley1979safeguarding} independently. Key safeguarding was the initial motivation of secret sharing, but now it is a basic cryptographic tool that is used in many application such as e-voting, crypto-currencies, etc. 

In Shamir secret sharing (SSS), the dealer constructs a polynomial $p(x)=s+r_1x+r_2x^2+\dots+r_{T-1}x^{T-1}$ of degree $T-1$, where the constant term is the secret $s$, and the other coefficients are chosen uniformly and independently at random from the field $\mathbb{F}_q$. Then, the dealer sends a distinct point $p(x)$ to each of the nodes. By using Lagrange interpolation, any subset of $T$ nodes can recover the polynomial, but any subset of less than $T$ nodes cannot understand anything about the secret $s$. %Nowadays, Shamir secret sharing is widely used in different coded computing and multi-party applications, e.g., martix multiplication \cite{nodehi2018entangled,hoseini2020coded}, voting \cite{hoseini2022votingjournal,binu2016secret}, secure aggregation \cite{bonawitz2017practical,bonawitz2019federated,so2021lightsecagg,jahani2022swiftagg,jahani2022swiftagg+}, etc.
In some cases, the dealer may be malicious and send some random number instead of sending the points that are located on the constructed polynomial.
In 1985, Chor et al.\cite{chor1985verifiable} extended SSS to verifiable secret sharing, which enables the nodes to verify whether their shares are consistent. In other words, in verifiable SSS, the nodes verify that their shares are indeed on polynomial of degree $T-1$.

\subsection{Random Secret Generation}
\label{subsection:randomsecretgeneration}
In many applications, nodes in the system need to agree on a random and unknown secret $s$ \cite{bar1989non}. In random secret generation, first, distinct and non-zero $\alpha_{1},\alpha_{2},\dots,\alpha_{N}$ are chosen uniformly and independently at random from field $\mathbb{F}_q$, which are known by all of the nodes and $\alpha_i$ is assigned to node $i$, $\forall i \in [N]$. Then, each node $i$ generates a random secret $s_i$ and constructs a random polynomial $p_i(x)=s_i+r^{(i)}_1x+r^{(i)}_2x^2+\dots+r^{(i)}_{T}x^{T}$ and sends $p_i(\alpha_j)$ to each node $j$, $\forall i,j \in [N]$. Let us define $p(x)\defeq \sum_{j=1}^Np_j(x)$. Each node $i$ has $p_1(\alpha_i),p_2(\alpha_i),\dots,p_N(\alpha_i)$, and it computes $p(\alpha_i)= \sum_{j=1}^Np_j(\alpha_i)$. Thus, each node has a share of $p(0)$ which is equal to $ \sum_{j=1}^N s_j$. That is, at the end of random secret generation, each node has a share of the overall secret but none of the nodes actually knows the secret value $s$. Random secret generation is private and the its complexity can be upper bounded by 1 multiplication invocation \cite{damgaard2006unconditionally}. %The algorithm of random secret generation is presented in Algorithm \ref{algorithm:randomsecretgeneration}.

%\begin{algorithm}
%\label{algorithm:randomsecretgeneration}

%\textit{Initialization:} A field $\mathbb{F}_q$ and $N$ non-zero and distinct elements $\alpha_1,\alpha_2,\dots,\alpha_{N} \in \mathbb{F}_q$ are known to all of the nodes, where $q$ is a prime number.

%\textit{The protocol:} 
%	\begin{enumerate}
%		\item Each node $n$ constructs a random polynomial $p_n(x)=s_n+r^{(n)}_1x+r^{(n)}_2x^2+\dots+r^{(n)}_{T}x^{T}$ and sends $p_n(\alpha_{n'})$ to node $n'$ for all $n,n' \in [N]$, by using SSS. 

%		\item
%		Each node $n$ computes $\displaystyle\sum_{j=1}^Np_j(\alpha_n)$, and store it as a share of random and unknown secret $\displaystyle\sum_{j=1}^Np_j(0)$.
		
%	\end{enumerate}
	
%\end{algorithm}

\subsection{Partition and $0$-Coded Vectors}
\label{subsection:partitionand0codedvector}

Let $s$ be a number in the field $\mathbb{F}_q$ and $\overline{s_1s_2\dots s_L}$ be its binary representation of length $L$. We define \textit{partition vector} $\V_{\!s}$ of $s$ as follows

\begin{align}\label{partition}
    \V_{\!s} &= \begin{bmatrix}
           \overline{s_1} \\
           \overline{s_1s_2} \\
           \vdots \\
           \overline{s_1s_2\dots s_L}.
         \end{bmatrix}
\end{align}

Next, we define $\V_{s}^{0}$ as the \textit{$0$-coded vector} of $s$ which is equal to 

\begin{align}\label{zero-coded}
\V_s^0=\begin{bmatrix}
           z_1 \\
           z_2 \\
           \vdots \\
           z_L
         \end{bmatrix}
         \text{, where $z_i=$}
	\begin{cases}
		\overline{s_1s_2\dots s_{i-1}1} &\text{ if $s_i=0$}, \\
		r_i &\text{ otherwise},\\
	\end{cases}
\end{align}
such that $r_i$ is a random binary number of length unequal to $i$, $\forall i \in [L]$. Here, our definition of a $0$-coded vector is a vectorized version of the $0$-encoding set introduced in \cite{lin2005efficient}, where the $0$-encoding and $1$-encoding sets of a string $s=s_1s_{2}\dots s_L \in \{0,1\}^L$ are defined as follows:
\begin{align}
\mathcal{S}_s^0&=\{s_1s_{2}\dots s_{i-1}1| s_i=0, 1 \leq i\leq L\}\\
\mathcal{S}_s^1&=\{s_1s_{2}\dots s_{i}| s_i=1, 1 \leq i\leq L\}
\end{align}
In \cite{lin2005efficient}, it is demonstrated that $a>b$ if and only if $\mathcal{S}_a^1$ and $\mathcal{S}_b^0$ have a common element.

\begin{theorem}
\label{theorem1}
Assume that $a$ and $b$ are two numbers with length-$L$ binary representations of $\overline{a_1a_2\dots a_L}$ and $\overline{b_1b_2\dots b_L}$, respectively. We have $a>b$ if and only if vector $\V_a-\V_b^0$ has exactly one $0$ entity.
\end{theorem}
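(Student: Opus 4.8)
The plan is to reduce the statement to an elementary fact about the most significant position at which the binary expansions of $a$ and $b$ disagree, thereby recovering a vectorized form of the Lin--Tzeng $0$/$1$-encoding characterization already quoted. Write $d_i$ for the $i$-th entry of $\V_a-\V_b^0$, so $d_i=\overline{a_1a_2\dots a_i}-z_i$, and note that $\overline{a_1a_2\dots a_i}$ is just the integer $\sum_{j=1}^{i}a_j2^{i-j}$. The heart of the argument is to characterise exactly when $d_i=0$.

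First I would dispose of the indices $i$ with $b_i=1$: there $z_i=r_i$ is a random binary number whose bit-length differs from $i$ (one may take it strictly longer, so that $r_i\ge 2^{i}>\overline{a_1a_2\dots a_i}$), hence $d_i\ne0$ for every such $i$. Next, for the indices $i$ with $b_i=0$ we have $z_i=\overline{b_1b_2\dots b_{i-1}1}=\sum_{j=1}^{i-1}b_j2^{i-j}+1$, which is odd, whereas $\overline{a_1a_2\dots a_i}$ is odd precisely when $a_i=1$; so $d_i=0$ forces $a_i=1$, and then uniqueness of the (fixed-length) binary representation forces $a_j=b_j$ for all $j<i$. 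Conversely, if $b_i=0$, $a_i=1$ and $a_j=b_j$ for $j<i$, then $\overline{a_1a_2\dots a_i}=\overline{b_1b_2\dots b_{i-1}1}=z_i$, i.e.\ $d_i=0$. The upshot is the equivalence: $d_i=0$ if and only if $i$ is a position at which $a$ and $b$ first differ with $a_i=1$ and $b_i=0$. In particular there is at most one such index (two of them, $i_1<i_2$, would each have to be "the first differing position", a contradiction).

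With this equivalence both directions are short. If $a>b$, let $i^{*}$ be the least index with $a_{i^{*}}\ne b_{i^{*}}$; comparing the expansions from the top, agreement on bits $1,\dots,i^{*}-1$ together with $a>b$ forces $a_{i^{*}}=1$ and $b_{i^{*}}=0$, so $d_{i^{*}}=0$, and by the equivalence any zero entry sits at such a first-difference index, hence at $i^{*}$ — exactly one zero. Conversely, a zero entry at position $i$ gives $a_j=b_j$ for $j<i$, $a_i=1$, $b_i=0$, and the usual most-significant-bit comparison yields $a>b$; moreover, running the same bookkeeping shows that $a=b$ and $a<b$ each produce no zero entry, so the "exactly one zero" condition indeed cannot occur unless $a>b$.

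The step I expect to demand the most care is the treatment of the random entries $z_i=r_i$ when $b_i=1$, together with the related issue of leading zeros in $\overline{a_1a_2\dots a_i}$: the assertion "$d_i\ne0$ when $b_i=1$" must be justified from the length/magnitude mismatch between $r_i$ and an $i$-bit prefix, or else handled by a convention (e.g.\ prepending a leading $1$ to all inputs so that every prefix $\overline{a_1a_2\dots a_i}$ genuinely has bit-length $i$, making "length unequal to $i$" a clean exclusion). Everything else is the standard "compare from the highest differing bit" argument, merely re-expressed in the partition and $0$-coded notation.
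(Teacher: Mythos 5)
Your proof follows essentially the same route as the paper's: identify the zero entries of $\V_a-\V_b^0$ with the most significant position where $a_i=1$ and $b_i=0$ after a common prefix, and read off both directions from there. You are in fact more careful than the paper on two points it glosses over --- the uniqueness of the zero entry (needed for the ``exactly one'' claim) and the possibility that a random $r_i$ of length unequal to $i$ could still collide with a prefix $\overline{a_1a_2\dots a_i}$ having leading zeros --- and your proposed fix (taking $r_i$ strictly longer than $i$ bits, or padding inputs with a leading $1$) is the right way to close that gap.
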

\begin{proof}
Let us assume that $a$ is greater than $b$. Thus, there exists a position $i \in [L]$, where $a_i=1$ and $b_i=0$, with $a_j=b_j$ for all $j<i$. The $i$th element of the partition vector of $a$ is equal to the $i$th element of 0-coded vector of $b$, which is $\overline{a_1a_2...a_{i-1}1}$. Thus, the $i$th entity of $\V_a-\V_b^0$ is equal to $0$.

On the other hand, let us assume that the $i$th entity of $\V_a-\V_b^0$ is equal to $0$. In this case, one can see that $b_i=0$, otherwise, it would not possible from (\ref{zero-coded}) that the $i$th element of $\V_b^0$ is equal to $\V_a$.
Hence, $a_1a_2...a_i=b_1b_2...b_{i-1}1$. Thus, $a>b$, while $a_i=1$ and $b_i=0$, and $a_j=b_j$ for all $j<i$. 
\end{proof}

\begin{example}
For ease of understanding, in this part we illustrate Theorem \ref{theorem1} with an example. Assume that $a=10$ and $b=9$. Binary representation of $a$ and $b$ are $1010$ and $1001$. One can see that $\V_a=[1,10,101,1010]^T$ and $\V_b^0=[11, 11,101,100]^T$. Hence, $\V_a-\V_b^0=[1110, 1111, 0 , 10]$ and the third entity is equal to $0$, so we can conclude that $a$ is greater than $b$. 
\end{example}

%Keeping Theorem~\ref{theorem1} in mind, we now formulate the secure comparison problem.
%\textbf{Notation.} In this work, we represent vectors by uppercase bold letters. $\ab*\Bb$ denotes the element-wise multiplication of two vectors $\ab$ and $\Bb$. We let $[N]$ denote the set $\{1,2,\dots,N\}$. We represent the transpose of a vector $\V$ by $\V^T$. Finally, we show sets with calligraphic font.

\section{Proposed Unconditionally Secure Comparison Scheme}
\label{section:proposedcomparison}

In this section, we describe the proposed scheme for distributed $N$-party secure maximum computation. The proposed scheme does not sacrifice the privacy of the private inputs (secrets) and does not reveal any intermediate values to the participating parties but the final result. To this end, we first propose a secure comparison scheme, where all $N$ nodes collaboratively compare two secret numbers. This secure comparison scheme is a building block in our $N$-number $\max$ computation system. Next, by modifying the proposed secure two-number comparison system, we present a secure comparison indicator (SCI) that outputs $0$ if the first number is greater than the second one, and outputs $1$ otherwise. Another building block of our $N$-number comparison system is the Secure Comparison Gate (SCG) that receives the partition and $0$-coded vectors of secret numbers $a$ and $b$ as inputs, and outputs the partition and $0$-coded vector of the $\max(a,b)$. SCG is critical because it allows us to carry on next comparison without revealing the intermediate maximums. Finally, we design a circuit based on SCGs to find the maximum of $N$ secrets in a distributed manner.  

%This section includes five subsections. In Subsection , we propose a secure comparison scheme, where  We propose a scheme for equality test in Subsection \ref{subsection:equality and zero indicator}.
%In Subsection \ref{subsection:securecomparisonindicator}, we propose a  protocol that outputs $0$ or $1$. The output of the proposed scheme is $0$, , and it is  in all of the other cases.
 
As noted earlier, in our $N$-party comparison system, we assume that at most $T$ of the nodes are semi-honest and may collude with each other to obtain some information about the private inputs of the other nodes.

\begin{Remark}
	One of the main blocks in our scheme is secure two-secret multiplication, which requires $N>2T$ \cite{ben1988completeness}. Thus, in the proposed algorithm, we assume that $N$ is greater than $2T$.
\end{Remark}

\begin{Remark}
	\label{remark:multiplicationinvocation}
Since the invocation of multiplication of two secrets is the dominant complexity of our proposed secure comparison technique, similar to \cite{damgaard2006unconditionally,nishide2007multiparty}, we define the complexity of a comparison scheme as the number of invocations of secrets multiplication protocol.
\end{Remark}

\begin{Remark}
	In our scheme, we assume that distinct and non-zero $\alpha_{1},\alpha_{2},\dots,\alpha_{N}$ are chosen uniformly and independently at random from field $\mathbb{F}_q$ in advance, which are known by all the nodes and $\alpha_j$ is assigned to node $j$, $\forall j \in [N]$.
\end{Remark}

\subsection{Secure Comparison}
\label{subsection:securecomparison1}

In this section, we describe how to compare two secrets in an information-theoretically secure manner. This is a fundamental algorithm in $\max$ computing of $N$ secret numbers.
 
We have $N$ nodes and two of them, denoted by $A_1$ and $A_2$, have secrets $s_1$ and $s_2$, respectively. All of the nodes are interested in computing the maximum of $s_1=\overline{s^{(1)}_1s^{(1)}_2\dots s^{(1)}_L}$ and $s_2=\overline{s^{(2)}_1s^{(2)}_2\dots s^{(2)}_L}$.
The proposed algorithm is as follows.

\textbf{Sharing Phase.}
In the sharing phase, $A_1$ and $A_2$ compute $\V_{s_1}$ and $\V_{s_2}^0$, respectively, and share them by using Shamir secret sharing \cite{shamir1979share}. More precisely, $A_1$ constructs a polynomial $\Pb_1(x)=\V_{s_1}+\R^{(1)}_1x+\R^{(1)}_2x^2+\dots+\R^{(1)}_Tx^T$, where $\R^{(1)}_i$ are chosen uniformly and independently at random from the field $\mathbb{F}^{L}$, and sends $\Pb_1(\alpha_j)$ to each node $j$, $\forall j \in [N]$. Similarly,  $A_2$ constructs a polynomial $\Pb_2(x)=\V^0_{s_2}+\R^{(2)}_1x+\R^{(2)}_2x^2+\dots+\R^{(2)}_Tx^T$, where $\R^{(2)}_i$ are chosen uniformly and independently at random from the field $\mathbb{F}^{L}$, and sends $\Pb_2(\alpha_j)$ to each node $j$, $\forall j \in [N]$.

\textbf{Random Secret Generation Phase.} In this phase, all of the nodes execute random secret generation in collaboration with each other as described in Subsection \ref{subsection:randomsecretgeneration} over the field $\mathbb{F}_q$. At the end of this phase, each node $i$ obtains the value of a degree-$T$ polynomial $p(x)$ at $x=\alpha_i$, where $p(0)$ is the secret value generated by the nodes in collaboration. We emphasize that none of the nodes actually knows this secret number.

\textbf{Vector Computation Phase.}
In this phase, first, each node $i$ computes $\Pb_1(\alpha_i)-\Pb_2(\alpha_i)$. Let us define $\Q(x) \defeq \Pb_1(x)-\Pb_2(x)$. One can see that $\Q(0)$ is equal to $\Pb_1(0)-\Pb_2(0)= \V_{s_1}-\V^0_{s_2}$, and each node $i$ has $\Q(\alpha_i)$. Hence, each node $i$ has a share of secret $\Sb \defeq \V_{s_1}-\V^0_{s_2}$. From Theorem \ref{theorem1}, if there exists a $0$ entity in vector $\Sb =[s'_1,s'_2,\dots,s'_L]^T$, then we can conclude that $s_1$ is greater than $s_2$.

At this stage, each node $i$ could reveal $\Q(\alpha_i)$ (sends $\Q(\alpha_i)$ to all other nodes), and then each of the nodes would be able to compute $\Q(x)$, and accordingly derive $\Sb$. With this, each of the nodes would be able to verify whether there is a $0$ entity in the vector $\Sb$. However, this approach leaks additional information about the secrets $s_1$ and $s_2$ beyond the final comparison result. In order to avoid such leakage and preserve the privacy of the secret numbers before revealing the final result, the nodes perform the next phase.

\textbf{Entity Computation Phase.}
Remember that each node $i$ has $\Q(x)=[q_1(x),q_2(x),\dots,q_L(x)]^T$ at $x=\alpha_i$ from the previous phase along with $p(\alpha_i)$ that was derived in the random secret generation phase. We note that $\Q(0)$ is equal to $\Sb \in \mathbb{F}_q^L$, while $p(0)$ is a random secret in $\mathbb{F}_q$. One can observe that $\prod_{j=1}^{L}q_j(0)$ is equal to $0$ if and only if there is a $0$ entry in the vector $\Sb = [s'_1, s'_2, \dots, s'L]^T$, indicating that $s_1 > s_2$. If $\prod{j=1}^{L}q_j(0)$ is not equal to $0$ and the nodes obtain its exact value, it may disclose information about $q_1(0), q_2(0), \dots, q_L(0)$. For instance, consider $L=3$, and if the nodes comprehend that the product of $q_1(0), q_2(0), q_3(0)$ equals $2$, they will deduce that the vector $[q_1(0), q_2(0), q_3(0)]$ cannot be equal to $[1, 1, 3]$. To safeguard privacy, instead of computing $\prod_{j=1}^{L}q_j(0)$ at the nodes, they calculate $p(0)\left(\prod_{j=1}^{L}q_j(0)\right)$ using secure MPC \cite{ben1988completeness, nodehi2019secure}. Consequently, each node $i$ obtains $s(\alpha_i)$, where $s(x) = p(x)\left(\prod_{j=1}^Lq_j(x)\right)$. By masking the exact value of $\prod_{j=1}^{L}q_j(0)$ with an unknown random number $p(0)$, the privacy of each element in $[q_1(0), q_2(0), \dots, q_L(0)]$ is preserved.

\textbf{Reconstruction Phase.} Each node $i$ gains $s(\alpha_i)$ from the previous phase. In this phase, each node $i$ broadcasts $s(\alpha_i)$ to other nodes, and therefore, by using Lagrange interpolation each of the nodes can compute $s(x)$ and derive $s(0)$ which is equal to $p(0)\left(\prod_{j=1}^{L}q_j(0)\right)$. We note that if $\Sb$ has a zero entity (meaning if secret $s_1$ is greater than $s_2$), we have $s(0)=0$, otherwise we have $s(0) \neq 0$ with high probability (since the randomly generated secret $p(0)$ is not equal to $0$ with high probability, see Remark~\ref{remark_randomgen} for further discussion). Thus, at the end of the reconstruction phase, each of the nodes deduces the maximum among the secret numbers $s_1$ and $s_2$ without gaining any information about the other number.

As for the privacy, we note that all sub-protocols including secure multiplications and the random generation are unconditionally private. The only phase that some information can be leaked is the reconstruction phase, where the nodes reveal $s(x)$ to derive $s(0)$ which is equal to $p(0)\left(\prod_{j=1}^{L}q_j(0)\right)$. One can see that $\prod_{j=1}^{L}q_j(0)$ is masked with an unknown random number $p(0)$. Thus, the nodes cannot understand additional information beyond the comparison result.

\emph{\textbf{Complexity:}} 
As mentioned in Remark \ref{remark:multiplicationinvocation}, the computational complexity of the algorithm is determined based on the number of multiplication invocations. In this algorithm, the complexity of random secret generation can be upper-bounded by just one multiplication invocation, as discussed in Subsection \ref{subsection:randomsecretgeneration}. Additionally, the computation of $p(0)\left(\prod_{j=1}^Lq_j(0)\right)$ requires $L$ multiplication invocations. Moreover, the rest of the algorithm has a computational complexity of less than one multiplication invocation. Thus, the computation complexity of the proposed secure comparison algorithm is $L+2$, while the complexity of the comparison algorithm of Rabbit \cite{makri2021Rabbit} is approximately equal to $53L$. Further, the complexity of the comparison schemes in \cite{damgaard2006unconditionally} and \cite{nishide2007multiparty} are $188L \log L + 205L$ and $279L+5$, respectively. The complexity of the secure comparison protocols is compared in Table \ref{table:comparison}. %One of the differences of our algorithm with \cite{makri2021Rabbit,damgaard2006unconditionally,nishide2007multiparty} is that, in our algorithm, we output a random number when $a\leq b$, while in the latter they output $1$ in this case. 
	%Also, our algorithm receives the coded version of the secret, while in \cite{damgaard2006unconditionally,nishide2007multiparty,makri2021Rabbit}, they receive the simple version of the numbers as the secrets. 
	From Table \ref{table:comparison}, we see that our secure comparison algorithm has more than $50\times$ lower complexity compared to the existing algorithms in the literature.

\begin{table}[h!]
	\centering
	\begin{tabular}{||c c||} 
		\hline
		Protocol &  Complexity \\ [0.5ex] 
		\hline\hline
		\cite{damgaard2006unconditionally} &  $188L \log L + 205L$ \\ 
		\cite{nishide2007multiparty} &  $279L+5$ \\
		\cite{makri2021Rabbit} & $\approx 53 L$ \\
		Proposed   &  $L+2$ \\ 
 [1ex] 
		\hline
	\end{tabular}\caption{Comparing the complexity of secure two-number comparison algorithms among $N$ parties, based on the number of multiplication invocations in each algorithm.}
    %\vspace{10mm}
\label{table:comparison}
\end{table}

\begin{Remark}
All references \cite{damgaard2006unconditionally,nishide2007multiparty,makri2021Rabbit} typically compare $s_1$ and $s_2$ by sharing their exact values among the nodes. However, this paper introduces a new method where sources distribute encoded versions of the secrets, rather than their exact values, to reduce the computational burden on the nodes. This approach also enhances our ability to perform additional non-linear operations, such as calculating medians, sorting, and more.
\end{Remark}

\begin{Remark}\label{remark_randomgen}
In the reconstruction phase, if $p(0)=0$ the term $p(0)\left(\prod_{j=1}^{L}q_j(0)\right)$ becomes $0$, even if the term $\prod_{j=1}^{L}q_j(0)$ is not equal to $0$, in which case the nodes may decide on an incorrect result. However, this is a highly unlikely scenario since the probability of
$p(0)=0$ is $\frac{1}{q}$, where $q$ is size of the field. So, by selecting a large enough field, the probability of this event can be made almost zero. 

Further, upon agreeing on a randomly generated secret in the random secret generation phase, before computing $p(0)\left(\prod_{j=1}^{L}q_j(0)\right)$, the nodes can verify that $p(0)$ is not equal to $0$ by using the scheme in \ref{subsection:equality and zero indicator}. In the highly unlikely scenario that $p(0)$ is equal to $0$, the nodes execute the random secret generation algorithm again to create a new polynomial $p(x)$ such that $p(0) \neq 0$. Upon verifying that $p(0)\neq 0$, they compute $s(0)=p(0)\left(\prod_{j=1}^{L}q_j(0)\right)$. With this, we are sure that the comparison result is correct with probability $1$.
\end{Remark} 

\subsection{Equality Test and Zero Indicator}
\label{subsection:equality and zero indicator}
As mentioned in Subsection \ref{subsection:securecomparison1}, the output of secure comparison algorithm is $0$ if $s_1>s_2$, and the output is a random number in field $\mathbb{F}_q$ in other cases. In order to use the secure comparison algorithm in a circuit for comparing more than two numbers, we need to show the output of the comparison with a bit. To this end, in this subsection, we implement a $\mathrm{Zero}$ function that gets a number $a \in \mathbb{F}_q$ as an input, and outputs a bit where,

\begin{align}
	\mathrm{Zero}(a)=\begin{cases}
		0 &\text{ if $a=0$} \\
		1 &\text{ if $a \neq 0$}\\
	\end{cases}.
\end{align}

In other words, suppose a secret $a$ is shared among the nodes using a polynomial $r(x)$, where $r(0) = a$. All nodes collaborate privately to determine $\text{Zero}(a)$. Once implemented, this function can be added to the end of the secure comparison algorithm to show the output with a bit, where the output is $0$ if $s_1>s_2$, and it is $1$ in other cases.

Before we describe how to implement this $\mathrm{Zero}$ function, we would like to note that the $\mathrm{Zero}$ function can also be utilized to infer whether two secret numbers are equal. That is, assume that two unknown and private numbers $a$ and $b$ from the field $\mathbb{F}_q$ are shared among the parties and they want to compute the shares of $\mathrm{Equal}(a,b)$, where 

\begin{align}
	\mathrm{Equal}(a,b)=\begin{cases}
		0 &\text{ if $a=b$} \\
		1 &\text{ otherwise}\\
	\end{cases}.
\end{align}

In this case, one can see that $\mathrm{Equal}(a,b)=\mathrm{Equal}(a-b,0)=\mathrm{Zero}(a-b)$. Thus, the $\mathrm{Zero}$ function can be also used for equality test in a standalone manner.

In the zero indicator algorithm, it is assumed that the shares of $a$ is distributed among the nodes and the nodes want to obtain shares of $0$ if $a=0$, and shares of $1$ otherwise. For this, we utilize Fermat's little theorem, which is presented for completeness in Theorem~\ref{Fermat}.

\begin{theorem}\label{Fermat}
	Assume that $q$ is a prime number and $a$ is a non-zero number, where $q\nmid a$. Then,
	\begin{align}
		a^{q-1}\overset{q}{\equiv} 1.
	\end{align}
\end{theorem}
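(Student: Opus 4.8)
The plan is to give the classical ``rearrangement'' proof, which is elementary and self-contained. Since $q$ is prime and $q\nmid a$, I would first argue that multiplication by $a$ permutes the nonzero residues modulo $q$. Concretely, consider the $q-1$ numbers $a\cdot 1, a\cdot 2, \dots, a\cdot(q-1)$. None of them is divisible by $q$: if $q\mid ai$ for some $i\in\{1,\dots,q-1\}$, then primality of $q$ would force $q\mid a$ or $q\mid i$, both of which are excluded. Moreover they are pairwise incongruent modulo $q$: if $ai\equiv aj\pmod{q}$ then $q\mid a(i-j)$, and since $q\nmid a$ and $q$ is prime we obtain $q\mid(i-j)$, which for $i,j\in\{1,\dots,q-1\}$ forces $i=j$. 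Hence $\{a\cdot 1, a\cdot 2, \dots, a\cdot(q-1)\}$ reduces modulo $q$ to a permutation of $\{1,2,\dots,q-1\}$.

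Next I would take products. On the one hand $\prod_{i=1}^{q-1}(ai)=a^{q-1}(q-1)!$. On the other hand, because the residues $ai$ run through $1,2,\dots,q-1$ in some order, $\prod_{i=1}^{q-1}(ai)\equiv\prod_{i=1}^{q-1} i=(q-1)!\pmod{q}$. Combining the two gives $a^{q-1}(q-1)!\equiv(q-1)!\pmod{q}$. Finally I would cancel the factor $(q-1)!$: since $q$ is prime, each of $1,2,\dots,q-1$ is coprime to $q$, so $(q-1)!$ is coprime to $q$ and hence invertible modulo $q$; multiplying both sides of the congruence by its inverse yields $a^{q-1}\equiv 1\pmod{q}$, as claimed.

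The argument is routine and I do not expect a real obstacle; the only place where primality of $q$ is genuinely used — and the only point that needs care — is the final cancellation, together with the ``no zero divisors'' step used to establish the permutation, since for composite $q$ the factors can fail to be units and the conclusion breaks down. An alternative route, if one prefers an induction, is to prove the stronger statement $a^q\equiv a\pmod{q}$ for every integer $a\ge 0$ by induction on $a$, using $(a+1)^q\equiv a^q+1\pmod{q}$ (which follows from $q\mid\binom{q}{k}$ for $0<k<q$), and then divide by $a$ when $q\nmid a$; but the rearrangement proof is shorter and suffices here.
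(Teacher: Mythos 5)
Your proof is correct: it is the standard rearrangement proof of Fermat's little theorem, and every step (the injectivity of multiplication by $a$ on the nonzero residues, the product comparison, and the cancellation of $(q-1)!$, each justified by the primality of $q$) is sound. Note, however, that the paper states this result without proof — it is quoted as the classical Fermat's little theorem, included only for completeness before being applied in the zero-indicator construction — so there is no proof in the paper to compare against; your argument simply supplies the standard justification for a cited textbook fact.
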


By using Fermat's little theorem, one can see that $a^{q-1}$ is equal to $0$ if and only if $a=0$, and it is equal to $1$ in all the other cases.

In a standalone implementation of the zero indicator, there is an unknown number $a$, which is already secret shared among the nodes. That is, there exists a polynomial $r(x)$ with constant term that equals to $a$, where $r(\alpha_i)$ is sent to node $i$, $\forall i \in [N]$.
By using secure multiplication \cite{ben1988completeness}, the nodes can compute the shares of $a^{q-1}$. In other words, they can construct a polynomial $g(x)$ privately in collaboration with each other, such that $g(0)$ is equal to $a^{q-1}$. Note that at first, each node $i$ has $r(\alpha_i)$, so they can continue the computation in a secure way until each node $i$ obtains $g(\alpha_i)$. Finally, 
if the computation is completed (the node do not need to do any other computation on top of this operation), 
the nodes reveal their corresponding points on $g(x)$ and thus, each of them can compute $g(x)$ and verify whether $g(0)$ is equal to $0$ or $1$, else, they use $g(x)$ to continue the computation without revealing anything.

As for the privacy, we note that the only sub-protocol used here is secure multiplication, which is an unconditionally private protocol as shown in \cite{ben1988completeness}. In \cite{fullproof}, it is shown that BGW multiplication scheme is unconditionally secure and broadcasting the result does not reveal anything about the private inputs beyond the output.
%The only step that some information is broadcasted to the other nodes is in the revealing of the multiplications which is equal to $g(x)$. While the multiplication is secure, so the nodes cannot understand anything about the input more than the result. 

\begin{Remark}
\label{remark:zeroatothepowerofq}Assume that $\lceil \log_2(q) \rceil = L$, and the binary representation of $(q-1)2$ is $q_1q_2\dots q_L \in {0,1}^{L}$. Let us define $a^{(0)}=a$. To compute $a^{q-1}$, all nodes initially collaborate to compute $a^{(1)} \equiv a^2$, followed by $a^{(2)} \equiv a^4$, and so on, up to $a^{(L-1)} \equiv a^{2^{L-1}}$ in $L-1$ rounds. Subsequently, $a^{q-1}$ can be expressed as $\prod_{j=1,q_{j}\neq0}^{L}a^{(L-j)}$. Therefore, the computation of $a^{q-1}$ requires at most $2L-2$ multiplication invocations.
\end{Remark}

\emph{\textbf{Complexity:}} 
As it mentioned in Remark \ref{remark:zeroatothepowerofq}, to compute $a^{q-1}$, the nodes perform secure multiplication for less than $2L=2\ceil{\log_2{(q)}}$ times. Since the length of binary representation of elements in $\mathbb{F}_q$, denoted by $L$, is $\ceil{\log(q)}$, the complexity of the proposed equality scheme is upper bounded by $2L$ whereas the complexity of the equality test proposed in \cite{nishide2007multiparty} is almost $81L$. We present a comparison of these complexities in Table~\ref{table:equality}, from which we see that the proposed equality test algorithm has more than $40\times$ lower computation complexity than the existing approaches.

	\begin{table}[h!]
		
		\centering
		\begin{tabular}{||c c||} 
			\hline
			Protocol &  Complexity \\ [0.5ex] 
			\hline\hline
			\cite{damgaard2006unconditionally} &  $94L \log L + 92$ \\ 
			\cite{nishide2007multiparty} &  $81L$ \\
			Proposed   &  $2L$ \\ 
			[1ex] 
			\hline
		\end{tabular}\caption{Complexity comparison of the equality test algorithms.}
		
		\label{table:equality}
	\end{table}

\subsection{Secure Comparison Indicator}
\label{subsection:securecomparisonindicator}

In this subsection, we combine the proposed secure comparison scheme and the equality test, i.e., zero indicator, to design a component named \emph{secure comparison indicator (SCI),} that compares two numbers from the field $\mathbb{F}_q$ and outputs $0$ if the first one is greater than the second one, and outputs $1$ in the other cases. SCI is a non-linear gate with two input vectors each of length $L$ and a single output which can be $0$ or $1$ based on the inputs. Precisely, 
\begin{align}
\mathrm{SCI}\left(a,b\right)=\begin{cases}
		0, &\text{ if $a>b$} \\
		1, &\text{ otherwise}\\
	\end{cases}.
\end{align}

We note that, in in Section \ref{subsection:securecomparison1}, we propose an algorithm to compare two secret numbers such that if $a>b$, the output is $0$, and when $b \geq a$, the output is a random number. Further, in Section \ref{subsection:equality and zero indicator}, we present the implementation of a zero indicator. One can see that we have
\begin{align}
\mathrm{SCI}\left(a,b\right)=\mathrm{Zero}\left(\mathrm{Comparison}(a,b)\right).
\end{align}
That is, the SCI gate can be implemented using the secure comparison and zero indicator algorithms as shown in Fig.~\ref{fig:sci}. We detail the implementation as follows:
\begin{enumerate}
	\item Because each node has a share of $\V_a$ and $\V_b^0$,  by running the comparison algorithm described in Subsection \ref{subsection:securecomparison1}, each nodes $i$ can derive $u(\alpha_i)$ in a private manner, where $u(0)=\prod_{j=1}^L q_j(0)$ and the degree of $u(x)$ is equal to $T$. It must be mentioned that in Subsection \ref{subsection:securecomparison1}, each node $i$ has access to $s(\alpha_i)$, where $s(0)=p(0)\big(\prod_{j=1}^L q_j(0)\big)$. But, in this part, because we will use zero indicator before revealing the final result, we do not need to multiply $\prod_{j=1}^L q_j(0)$ with a random secret $p(0)$.
 We note that $u(x)$ is constructed by using secure multi-party computation as detailed in \cite{ben1988completeness} and none of the nodes knows anything about the secret $u(0)$. As mentioned in Subsection \ref{subsection:securecomparison1}, if $a>b$, $u(0)$ is equal to $0$, else, $u(0)$ is a non-zero random number in $\mathbb{F}_q$. 
	
	\item As shown in Subsection \ref{subsection:equality and zero indicator}, the nodes then construct a polynomial $g(x)$, such that $g(0)$ is equal to $u(0)^{q-1}$. We note that each node $i$ has $u(\alpha_i)$, so they can continue the computation in a secure way until each node $i$ obtains $g(\alpha_i)$. Finally, using the Fermat's little theorem, the nodes can represent the result with a bit. 
\end{enumerate}

%\begin{figure}[htbp]
%	\centering
%	\includegraphics[draft=false,width=90mm]{SCI-1.PNG}
%	
%	\caption{\emph{SCI} is combination of secure comparison and and zero indicator. It takes partition vector of $a$ and $0$-coded vector of $b$, and outputs $0$ if $a>b$ or $1$ in other cases.}
%	\label{fig:sci}
%\end{figure}

\begin{figure}[htbp]
\centering
\resizebox {.6\columnwidth} {!} {
\begin{tikzpicture}[scale=1]
\tikzstyle{every node}=[font=\small]

%\draw[thick] (-.5,0) rectangle (2,1) node [pos=.5] {Encoder};

\draw[rounded corners,fill=black!10!white] (-0.5,-0.5) rectangle (5.5,2.5) node at (2.5,2) {SCI Module};

\draw[rounded corners,fill=black!30!white] (0,0) rectangle (2,1.5) node [align=center,pos=.5] {Secure\\Comparison};

\draw[rounded corners,fill=black!30!white] (3,0) rectangle (5,1.5) node [align=center,pos=.5] {Zero\\Indicator};

\draw[-latex] (2,0.75) -- (3,0.75);
\draw[-latex] (5,0.75) -- (6,0.75) node at (6.35,0.75) {$0/1$};

\draw[-latex] (-1,0.5) -- (0,0.5) node at (-1.35,0.5) {$\V_b^0$};
\draw[-latex] (-1,1) -- (0,1) node at (-1.35,1) {$\V_a$};

\end{tikzpicture}
}
\caption{The SCI Module is a combination of secure comparison and zero indicators. It takes the partition vector of $a$ and the $0$-coded vector of $b$, and outputs $0$ if $a>b$ or $1$ in other cases.}
\label{fig:sci}
\end{figure}
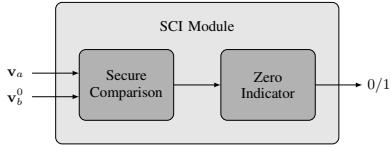

As for the privacy, SCI is a combination of two unconditionally secure protocols. Thus, the proposed SCI protocol is unconditionally secure.

\emph{\textbf{Complexity:}} One can see that the complexity of the SCI component is equal to the complexity of $\mathrm{Zero}$ plus the complexity of secure comparison which is equal to $3L+2$.

\subsection{Secure Comparison Gate}\label{subsection:securecomparisongate}

So far, we have discussed how to compare two secrets using $N$ parties and show the result with a bit. To extend our method for $N$-number secure $\max$ computation, in this subsection, we propose an algorithm, called \emph{Secure Comparison Gate} (\emph{SCG}), which outputs the shares of the partition and $0$-coded vector of the maximum privately instead of sharing just the indicator as in Section \ref{subsection:securecomparisonindicator}. This allows us to combine the secure comparison gates when comparing more than two numbers without revealing anything about the intermediate maximums.
%, and outputs the secret share of partition and $0$-coded vector of $\max(a,b)$, while the privacy is preserved and the nodes do not know anything about the outputs.

The SCG component compares two secret numbers denoted by $a,b \in \mathbb{F}_q$. As shown in Fig. \ref{fig:scg}, SCG is a non-linear component with four input vectors and two output vectors of length $L$. SCG receives partition and $0$-coded vectors of both of $a$ and $b$, and outputs partition vector and $0$-coded vector of $\max(a,b)$ (each node has a share of these). For ease of representation, let us have two sources $A$ and $B$ such that the former has a secret number $a$, and the latter has $b$ as the secret. The goal is to construct the partition vector and $0$-coded vector of $\max(a,b)$ in the output privately. The proposed algorithm is as follows.

%\begin{figure}[htbp]
%	\centering
%	\includegraphics[draft=false,width=90mm]{SCG-1.PNG}
%	
%	\caption{\emph{SCG} is a component that receives partition vectors and $0$-coded vectors of the inputs $a$ and $b$, and outputs partition vector and $0$-coded vector of $\max(a,b)$. }
%	\label{fig:scg}
%\end{figure}

\begin{figure}[htbp]
\centering
\resizebox {.6\columnwidth} {!} {
\begin{tikzpicture}[scale=1]
\tikzstyle{every node}=[font=\small]

\draw[rounded corners,fill=black!30!white] (0,0) rectangle (3,2) node [align=center,pos=.5] {\Large SCG};

\draw[-latex] (3,0.75) -- (3.8,0.75) node at (4.45,0.75) {$\V_{\max(a,b)}^0$};
\draw[-latex] (3,1.25) -- (3.8,1.25) node at (4.45,1.25) {$\V_{\max(a,b)}$};

\draw[-latex] (-1,0.4) -- (0,0.4) node at (-1.35,0.4) {$\V_b^0$};
\draw[-latex] (-1,0.8) -- (0,0.8) node at (-1.35,0.8) {$\V_b$};
\draw[-latex] (-1,1.2) -- (0,1.2) node at (-1.35,1.2) {$\V_a^0$};
\draw[-latex] (-1,1.6) -- (0,1.6) node at (-1.35,1.6) {$\V_a$};

\end{tikzpicture}
}
\caption{SCG is a component that takes partition vectors and $0$-coded vectors of the inputs $a$ and $b$, and produces the partition vector and $0$-coded vector of $\max(a,b)$.}
\label{fig:scg}
\end{figure}
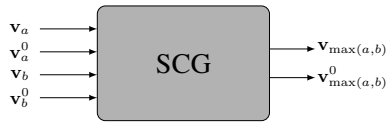

\textbf{Sharing Phase.}
In this phase, $A$ computes $\V_{a}$ and $\V_{a}^0$, and shares them by using Shamir secret sharing. More precisely, $A$ constructs two polynomials $\ab_0(x)=\V_{a}^0+\R^{(1,0)}_1x+\R^{(1,0)}_2x^2+\dots+\R^{(1,0)}_Tx^T$ and $\ab_1(x)=\V_{a}+\R^{(1,1)}_1x+\R^{(1,1)}_2x^2+\dots+\R^{(1,1)}_Tx^T$ , where $\R^{(1,j)}_i$ are chosen uniformly and independently at random from the field $\mathbb{F}_q^{L}$, and sends $\ab_0(\alpha_j)$ and $\ab_1(\alpha_j)$ to each node $j$, $\forall j \in [N]$. Similarly,  $B$ constructs two polynomials $\Bb_0(x)$ and $\Bb_1(x)$ to be able to share $\V_{b}$ and $\V_{b}^0$ and  sends $\Bb_0(\alpha_j)$ and $\Bb_1(\alpha_j)$ to each node $j$, $\forall j \in [N]$.

\textbf{Comparison Indicator.} 
As mentioned in the previous phase, each node $i$ has $\ab_1(\alpha_i)$ and $\Bb_0(\alpha_i)$. Hence, each of the nodes can execute SCI algorithm, as explained in Section \ref{subsection:securecomparisonindicator}, to construct $g(x)$ in collaboration with each other. When the execution is completed, each node $i$ has $g(\alpha_i)$, where $g(0)$ is $0$ if $a>b$, and $g(0)$ is $1$ in other cases.

\textbf{Output Construction.} This is the critical step of the SCG algorithm. In this phase, the goal is to distribute the shares of $\V_{\max(a,b)}$ and $\V_{\max(a,b)}^0$ among the nodes. To do that, the nodes compute $g(0)\V_b+(1-g(0))\V_a$ and $g(0)\V_b^0+(1-g(0))\V_a^0$ together. One can see that if $a>b$, then $g(0)\V_b+(1-g(0))\V_a$ and $g(0)\V_b^0+(1-g(0))\V_a^0$ are equal to $\V_a$ and $\V_a^0$, respectively, and in other cases they are equal to $\V_b$ and $\V_b^0$, respectively. To be able to compute $g(0)\V_b+(1-g(0))\V_a$ and $g(0)\V_b^0+(1-g(0))\V_a^0$, nodes can use MPC as explained in \cite{ben1988completeness,nodehi2019secure}.
Finally, each node $i$ has $\mathbf{o}_1(\alpha_i)$ and $\mathbf{o}_0(\alpha_i)$, where $\mathbf{o}_0(0)=g(0)\V_b^0+(1-g(0))\V_a^0$ and $\mathbf{o}_1(0)=g(0)\V_b+(1-g(0))\V_a$.

After these three phases, the nodes have access to the partition vector and $0$-coded vector of the $\max(a,b)$ while the privacy is preserved. We note that the nodes cannot understand anything about the private inputs and outputs. That is, even though the each node has a share of the partition and $0$-coded vectors of the maximum, they do not know the maximum number. With this, essentially, they continue to compare $\max(a,b)$ with another secret number $c$, which constitutes the basis of our $N$-number secure $\max$ computation scheme.

As for the privacy, the nodes first run the secure comparison indicator (SCI) without revealing anything, which is unconditionally secure as discussed earlier. Further, in the output reconstruction phase, the nodes just do some secure multiplication without revealing any further information. Thus, the overall SCG component does not have any information leakage and it is information-theoretically secure.

\emph{\textbf{Complexity:}} The complexity of the SCG scheme is equal to the summation of complexity of comparison indicator and output construction which is equal to $3L+2 + L + L=5L+2$.

\subsection{$N$-party Secure $\max$ Computation}
\label{subsection:securemaxcomputation}
In this subsection, we propose a scheme to perform the secure $\max$ computation in a system with $N$ nodes. Each node $i$ has a private input $s^{(i)}$. Nodes are interested in computing $\max(s^{(1)},s^{(2)},\dots,s^{(N)})$ without sacrificing the privacy of their secret inputs. For this, first, we partition the nodes into $\ceil{\frac{K}{2}}$ groups of size two, denoted by $\mathcal{G}_1,\mathcal{G}_2,\dots,\mathcal{G}_{\ceil{\frac{K}{2}}}$. Then, the nodes apply the SCG algorithm to find the partition vector and $0$-coded vector of the maximum of each group. Then, they can re-partition the outputs and do it iteratively, until obtaining the partition and $0$-coded vectors of the final result as shown in Fig \ref{fig:maxfunc}. We note that the last entity of partition vector $\V_{\max}$ is equal to $\max$ from (\ref{partition}).

%\begin{figure}[htbp]
%    \centering
%    \includegraphics[draft=false,height=100mm]{maxfunc-1.PNG}
%
%    \caption{By using SCG, nodes can compare secrets, iteratively,  and finally obtain partition vector and $0$-coded vector of the $\max(s^{(1)},s^{(2)},\dots,s^{(N)})$ in collaboration with each other. }
%    \label{fig:maxfunc}
%\end{figure}

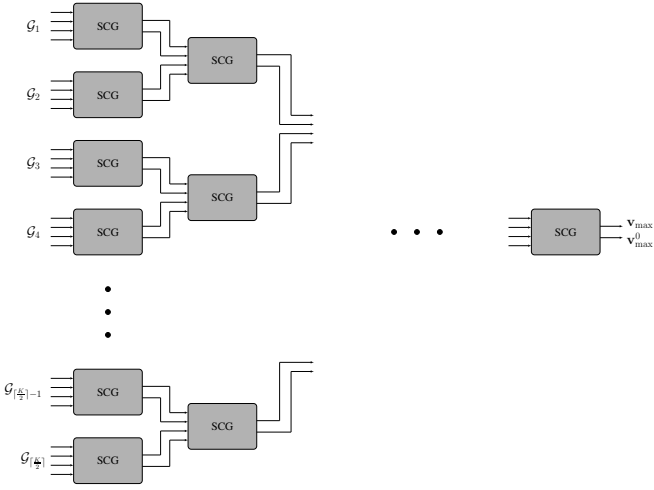
\begin{figure}[htbp]
\centering
\resizebox {.975\columnwidth} {!} {
\begin{tikzpicture}[scale=1]
\tikzstyle{every node}=[font=\LARGE]

\draw[rounded corners,fill=black!30!white] (0,0) rectangle (3,2) node [align=center,pos=.5] {\Large SCG};

%  node at (4.45,0.75) {$\V_{\max(a,b)}^0$}

\draw[-latex] (-1,0.4) -- (0,0.4) node at (-1.75,1) {$\mathcal{G}_1$};
\draw[-latex] (-1,0.8) -- (0,0.8);
\draw[-latex] (-1,1.2) -- (0,1.2);
\draw[-latex] (-1,1.6) -- (0,1.6);

\draw (3,0.75) -- (3.8,0.75);
\draw (3,1.25) -- (4.2,1.25);

%%%%%%%%%%%%%%%%%%%%%%%%%%%%%%%%%%%%%%%%

\draw[rounded corners,fill=black!30!white] (0,-3) rectangle (3,-1) node [align=center,pos=.5] {\Large SCG};

\draw[-latex] (-1,-2.6) -- (0,-2.6) node at (-1.75,-2) {$\mathcal{G}_2$};
\draw[-latex] (-1,-2.2) -- (0,-2.2);
\draw[-latex] (-1,-1.8) -- (0,-1.8);
\draw[-latex] (-1,-1.4) -- (0,-1.4);

\draw (3,-2.25) -- (4.2,-2.25);
\draw (3,-1.75) -- (3.8,-1.75);

%%%%%%%%%%%%%%%%%%%%%%%%%%%%%%%%%%%%%%%%

\draw[rounded corners,fill=black!30!white] (5,-1.5) rectangle (8,0.5) node [align=center,pos=.5] {\Large SCG};

\draw[-latex] (4.2,-1.1) -- (5,-1.1);
\draw[-latex] (3.8,-0.7) -- (5,-0.7);
\draw[-latex] (3.8,-0.3) -- (5,-0.3);
\draw[-latex] (4.2,0.1) -- (5,0.1);

\draw (4.2,0.1) -- (4.2,1.25);
\draw (3.8,-0.3) -- (3.8,0.75);

\draw (4.2,-1.1) -- (4.2,-2.25);
\draw (3.8,-0.7) -- (3.8,-1.75);

%%%%%%%%%%%%%%%%%%%%%%%%%%%%%%%%%%%%%%%%%%
%%%%%%%%%%%%%%%%%%%%%%%%%%%%%%%%%%%%%%%%%%

\draw[rounded corners,fill=black!30!white] (0,-6) rectangle (3,-4) node [align=center,pos=.5] {\Large SCG};

%  node at (4.45,0.75) {$\V_{\max(a,b)}^0$}

\draw[-latex] (-1,-5.6) -- (0,-5.6) node at (-1.75,-5) {$\mathcal{G}_3$};
\draw[-latex] (-1,-5.2) -- (0,-5.2);
\draw[-latex] (-1,-4.8) -- (0,-4.8);
\draw[-latex] (-1,-4.4) -- (0,-4.4);

\draw (3,-5.25) -- (3.8,-5.25);
\draw (3,-4.75) -- (4.2,-4.75);

%%%%%%%%%%%%%%%%%%%%%%%%%%%%%%%%%%%%%%%%

\draw[rounded corners,fill=black!30!white] (0,-9) rectangle (3,-7) node [align=center,pos=.5] {\Large SCG};

\draw[-latex] (-1,-8.6) -- (0,-8.6) node at (-1.75,-8) {$\mathcal{G}_4$};
\draw[-latex] (-1,-8.2) -- (0,-8.2);
\draw[-latex] (-1,-7.8) -- (0,-7.8);
\draw[-latex] (-1,-7.4) -- (0,-7.4);

\draw (3,-8.25) -- (4.2,-8.25);
\draw (3,-7.75) -- (3.8,-7.75);

%%%%%%%%%%%%%%%%%%%%%%%%%%%%%%%%%%%%%%%%

\draw[rounded corners,fill=black!30!white] (5,-7.5) rectangle (8,-5.5) node [align=center,pos=.5] {\Large SCG};

\draw[-latex] (4.2,-7.1) -- (5,-7.1);
\draw[-latex] (3.8,-6.7) -- (5,-6.7);
\draw[-latex] (3.8,-6.3) -- (5,-6.3);
\draw[-latex] (4.2,-5.9) -- (5,-5.9);

\draw (4.2,-5.9) -- (4.2,-4.75);
\draw (3.8,-6.3) -- (3.8,-5.25);

\draw (4.2,-7.1) -- (4.2,-8.25);
\draw (3.8,-6.7) -- (3.8,-7.75);

%%%%%%%%%%%%%%%%%%%%%%%%%%%%%%%%%%%%%%%%

\draw (8,-6.25) -- (9,-6.25);
\draw (8,-6.75) -- (9.5,-6.75);

\draw (8,-0.25) -- (9.5,-0.25);
\draw (8,-0.75) -- (9,-0.75);

\draw (9.5,-0.25) -- (9.5,-2.9);
\draw (9,-0.75) -- (9,-3.3);

\draw (9.5,-6.75) -- (9.5,-4.1);
\draw (9,-6.25) -- (9,-3.7);

\draw[-latex] (9.5,-4.1) -- (10.5,-4.1);
\draw[-latex] (9,-3.7) -- (10.5,-3.7);

\draw[-latex] (9.5,-2.9) -- (10.5,-2.9);
\draw[-latex] (9,-3.3) -- (10.5,-3.3);

%%%%%%%%%%%%%%%%%%%%%%%%%%%%%%%%%%%%%%%%%%
%%%%%%%%%%%%%%%%%%%%%%%%%%%%%%%%%%%%%%%%%%

\draw[rounded corners,fill=black!30!white] (20,-9) rectangle (23,-7) node [align=center,pos=.5] {\Large SCG};

\draw[-latex] (19,-8.6) -- (20,-8.6);
\draw[-latex] (19,-8.2) -- (20,-8.2);
\draw[-latex] (19,-7.8) -- (20,-7.8);
\draw[-latex] (19,-7.4) -- (20,-7.4);

\draw[-latex] (23,-8.25) -- (24,-8.25) node at (24.75,-8.35) {$\V_{\max}^0$};
\draw[-latex] (23,-7.75) -- (24,-7.75) node at (24.75,-7.65) {$\V_{\max}$};

\filldraw (14,-8) circle (3pt);
\filldraw (15,-8) circle (3pt);
\filldraw (16,-8) circle (3pt);

%%%%%%%%%%%%%%%%%%%%%%%%%%%%%%%%%%%%%%%%%%
%%%%%%%%%%%%%%%%%%%%%%%%%%%%%%%%%%%%%%%%%%

\filldraw (1.5,-10.5) circle (3pt);
\filldraw (1.5,-11.5) circle (3pt);
\filldraw (1.5,-12.5) circle (3pt);

\draw[rounded corners,fill=black!30!white] (0,-16) rectangle (3,-14) node [align=center,pos=.5] {\Large SCG};

%  node at (4.45,0.75) {$\V_{\max(a,b)}^0$}

\draw[-latex] (-1,-15.6) -- (0,-15.6) node at (-2.2,-15) {$\mathcal{G}_{\ceil{\frac{K}{2}}-1}$};
\draw[-latex] (-1,-15.2) -- (0,-15.2);
\draw[-latex] (-1,-14.8) -- (0,-14.8);
\draw[-latex] (-1,-14.4) -- (0,-14.4);

\draw (3,-15.25) -- (3.8,-15.25);
\draw (3,-14.75) -- (4.2,-14.75);

%%%%%%%%%%%%%%%%%%%%%%%%%%%%%%%%%%%%%%%%

\draw[rounded corners,fill=black!30!white] (0,-19) rectangle (3,-17) node [align=center,pos=.5] {\Large SCG};

\draw[-latex] (-1,-18.6) -- (0,-18.6) node at (-1.75,-18) {$\mathcal{G}_{\ceil{\frac{K}{2}}}$};
\draw[-latex] (-1,-18.2) -- (0,-18.2);
\draw[-latex] (-1,-17.8) -- (0,-17.8);
\draw[-latex] (-1,-17.4) -- (0,-17.4);

\draw (3,-18.25) -- (4.2,-18.25);
\draw (3,-17.75) -- (3.8,-17.75);

%%%%%%%%%%%%%%%%%%%%%%%%%%%%%%%%%%%%%%%%

\draw[rounded corners,fill=black!30!white] (5,-17.5) rectangle (8,-15.5) node [align=center,pos=.5] {\Large SCG};

\draw[-latex] (4.2,-17.1) -- (5,-17.1);
\draw[-latex] (3.8,-16.7) -- (5,-16.7);
\draw[-latex] (3.8,-16.3) -- (5,-16.3);
\draw[-latex] (4.2,-15.9) -- (5,-15.9);

\draw (4.2,-15.9) -- (4.2,-14.75);
\draw (3.8,-16.3) -- (3.8,-15.25);

\draw (4.2,-17.1) -- (4.2,-18.25);
\draw (3.8,-16.7) -- (3.8,-17.75);

%%%%%%%%%%%%%%%%%%%%%%%%%%%%%%%%%%%%%%%%

\draw (8,-16.25) -- (9,-16.25);
\draw (8,-16.75) -- (9.5,-16.75);

%\draw (8,-10.25) -- (9.5,-10.25);
%\draw (8,-10.75) -- (9,-10.75);

%\draw (9.5,-10.25) -- (9.5,-12.9);
%\draw (9,-10.75) -- (9,-13.3);

\draw (9.5,-16.75) -- (9.5,-14.1);
\draw (9,-16.25) -- (9,-13.7);

\draw[-latex] (9.5,-14.1) -- (10.5,-14.1);
\draw[-latex] (9,-13.7) -- (10.5,-13.7);

%\draw[-latex] (9.5,-12.9) -- (10.5,-12.9);
%\draw[-latex] (9,-13.3) -- (10.5,-13.3);

\end{tikzpicture}
}
\caption{Using SCG, nodes can iteratively compare secrets and ultimately collaboratively obtain the partition vector and '0'-coded vector of $\max\left(s^{(1)},s^{(2)},\dots,s^{(N)}\right)$.}
\label{fig:maxfunc}
\end{figure}

As for the privacy, it is shown that each SCG is unconditionally secure. Thus, designing  a circuit based on SCGs does not leak any information and is unconditionally secure.

\emph{\textbf{Complexity:}} The proposed  algorithm is for finding the maximum of $N$ secret numbers without any information leakage. In the $\max$ function algorithm, the complexity is $N-1$ times of complexity of SCG. Thus, the total complexity is $(N-1)(5L+2)$.

\begin{Remark}
Proposed schemes in \cite{damgaard2006unconditionally,nishide2007multiparty,makri2021Rabbit} are only for the comparison of two secret numbers that are shared among $N$ parties. To be able to apply the schemes in \cite{damgaard2006unconditionally,nishide2007multiparty,makri2021Rabbit} to the comparison of more than two secret numbers, one needs to do iterative pairwise comparison which leaks the intermediate maximums. Further, all of these schemes in \cite{damgaard2006unconditionally,nishide2007multiparty,makri2021Rabbit} require higher computation complexity than our scheme, as shown in Tables~\ref{table:comparison} and~\ref{table:equality}.
\end{Remark}

\begin{Remark}
\label{remark:mincomputatio}
It is straightforward to modify the proposed method to compute minimum of $N$ secret numbers. To compute the minimum privately, first the nodes multiply all shares by $-1$ and then apply secure $\max$ function on the secrets, and finally change the sign of the result.
\end{Remark}

\section{Applications}
\label{section:application}

In this section, we discuss some applications of the proposed secure comparison algorithm in Section \ref{section:proposedcomparison}.

\subsection{Secure Auction}
\label{subsection:secureauction}

One of the main uses of $N$-party comparison is computing the maximum of $N$ secret numbers, which is widely used in auctions \cite{mcmillan1994selling,rassenti1982combinatorial}. In \cite{lehmann2002truth}, a solution for auction is proposed by using an approximate greedy algorithm. The downside of this scheme is that the bids are not private. A secure solution for auction was proposed in \cite{franklin1996design} that uses multiple servers as trusted third parties and the bid topology is revealed to these trusted third parties. Micali and Rabin proposed a method in \cite{micali2014cryptography} that preserves the privacy of the bid by using Pedersen commitment. At the end of the bidding phase, this protocol reveals the bid information to the auctioneers. Another practical and multi-party auction protocol was presented in \cite{brandt2005efficient} which is computationally secure. 

By using the proposed algorithm in Section \ref{subsection:securemaxcomputation}, one can find the maximum price in an auction in a completely information-theoretically private manner without revealing any of the bids. In addition, if the nodes aim to deduce the index of the maximum price to determine the node which offers the maximum price without even revealing any information about the maximum bid, one can extend SCG such that it outputs the index of the maximum, i.e., the node which makes the greatest bid, similar to the process of outputting the partition and $0$-coded vectors of the maximum. For clarification, we detail the extended version of SCG in the following.

\subsubsection{Extended SCG}

In Section \ref{subsection:securecomparisongate}, we show how to design the SCG component for comparing two secret inputs $a$ and $b$ such that it outputs the secret share of the partition and $0$-coded vectors of $\max(a,b)$, while the privacy is preserved and the nodes do not know anything about the inputs and outputs beyond their shares. Here, we aim to modify the SCG design and propose \emph{Extended SCG (ESCG)}. As shown in Fig. \ref{fig:escg}, ESCG receives the partition and $0$-coded vectors of the secret numbers $a$ and $b$ just like the original SCG along with indices of $a$ and $b$. ESCG outputs the partition and $0$-coded vectors of $\max(a,b)$ as well as the index of the $\max(a,b)$. By using ESCG instead of SCG in Section \ref{subsection:securemaxcomputation}, the nodes also can compute the index of the maximum, which is useful in secure auction in determining the winner of the auction, without revealing the individual bids.

%\begin{figure}[htbp]
%	\centering
%	\includegraphics[draft=false,width=60mm]{ESCG-1.PNG}
%	\caption{\emph{ESCG} receives partition vector, $0$-coded vector and index of each of $a$ and $b$, and outputs partition vector, $0$-coded vector and index of the $\max(a,b)$. }
%	\label{fig:escg}
%\end{figure}

\begin{figure}[htbp]
\centering
\resizebox {.6\columnwidth} {!} {
\begin{tikzpicture}[scale=1]
\tikzstyle{every node}=[font=\small]

\draw[rounded corners,fill=black!30!white] (0,-0.5) rectangle (4,2.5) node [align=center,pos=.5] {\Large ESCG};

\draw[-latex] (4,1) -- (4.8,1) node at (5.45,1) {$\V_{\max(a,b)}^0$};
\draw[-latex] (4,1.5) -- (4.8,1.5) node at (5.45,1.5) {$\V_{\max(a,b)}$};
\draw[-latex] (4,.5) -- (4.8,.5) node at (5.45,.5) {$i_{\max(a,b)}$};

\draw[-latex] (-1,0.7) -- (0,0.7) node at (-1.35,0.7) {$\V_b$};
\draw[-latex] (-1,0.3) -- (0,0.3) node at (-1.35,0.3) {$\V_b^0$};
\draw[-latex] (-1,-0.1) -- (0,-0.1) node at (-1.35,-0.1) {$i_b$};

\draw[-latex] (-1,2.1) -- (0,2.1) node at (-1.35,2.1) {$\V_a$};
\draw[-latex] (-1,1.7) -- (0,1.7) node at (-1.35,1.7) {$\V_a^0$};
\draw[-latex] (-1,1.3) -- (0,1.3) node at (-1.35,1.3) {$i_a$};

\end{tikzpicture}
}
\caption{ESCG receives the partition vector, $0$-coded vector, and index of each of $a$ and $b$, and outputs the partition vector, $0$-coded vector, and index of $\max(a,b)$.}
\label{fig:escg}
\end{figure}
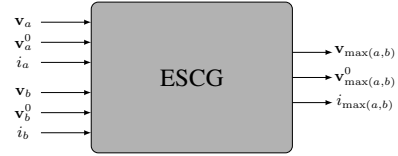

%As shown in , SCG is a non-linear component with six input vectors and three outputs. In other words, SCG receives partition and $0$-coded vectors and index of both of $a$ and $b$, and outputs partition vector and $0$-coded vector and index of $\max(a,b)$. 
For ease of presentation, we assume that the partition vector, $0$-coded vector and index of two secret numbers $a$ and $b$ are shared among $N$ nodes. The goal is to construct the partition vector, $0$-coded vector and index of $\max(a,b)$ in the output privately. The proposed algorithm largely follows from that of Section $\ref{subsection:securecomparisongate}$, except that in the output construction the goal is to distribute the shares of $\V_{\max(a,b)}$, $\V_{\max(a,b)}^0$ and $i_{\max(a,b)}$ among the nodes. Obtaining the shares of $\V_{\max(a,b)}$ and $\V_{\max(a,b)}^0$ is the same as Section \ref{subsection:securecomparisongate}. To gain the shares of $i_{\max(a,b)}$, the nodes compute $g(0)i_b+(1-g(0))i_a$ together. One can see that if $a>b$, then $g(0)i_b+(1-g(0))i_a$ is equal to $i_a$, and in other cases it is equal to $i_b$.

Thus, in the end, the nodes have access to the partition vector, $0$-coded vector and index of $\max(a,b)$ while the privacy is preserved. We note that the nodes cannot understand anything about the private inputs and outputs as each of them only has access to a secret share.

\subsection{Secure Median Computation}
\label{subsection:median}
Median is one of the main operations in distributed algorithms. It is shown that median-based distributed gradient descent algorithm is robust against outliers and certain attacks, especially untargeted poisoning\cite{yin2018byzantine}. Further, median-based distributed gradient descent can be combined with different methods to increase the convergence rate in learning.
In this part, the setting is the same as Section \ref{section:problemsetting}, except that the nodes are interested in finding the median of their secret inputs $s_1,s_2,\ldots,s_N$ as opposed to computing the $\max$ among them.

\begin{enumerate}
	\item To compute the median, first, all of the nodes compare each pair of $s_1,s_2,\ldots,s_N$ by using SCI as described in Section \ref{subsection:securecomparisonindicator}. In other words, for each $s_i$, the nodes run secure comparison indicator algorithm and compare it with each of the other inputs $s_1,s_2,\ldots,s_N$.
	
	\item Up to now, each node $j$ has $g_1(\alpha_j),g_2(\alpha_j),\dots,g_N(\alpha_j)$, where $g_k(0)$ is equal to $0$, if  $s_i$ is greater than $s_k$, and it is equal to 1 in other cases, $\forall k \in [N]$. Let us define $g(x)=\sum_{k=1}^Ng_k(x)$. Hence, each node $j$ can compute $g(\alpha_j)=\sum_{k=1}^Ng_k(\alpha_j)$, where $g(0)$ is equal to $\frac{N}{2}$, if and only if $s_i$ is the median.
	
	\item To verify that $g(0)$ is equal to $\frac{N}{2}$,
	the nodes can use equality test as mentioned in Section \ref{subsection:equality and zero indicator}. If $g(0) \neq \frac{N}{2}$, the nodes do the same with another secret $s_i$.
	
\end{enumerate}

\begin{Remark}
	One can see that we can extend this algorithm to find the secret with special rank among all of the other secrets, which is shown by $\mathrm{Rank}_t$.	
	This function returns the secret with statistical rank of $t$. To be able to compute $\mathrm{Rank}_t$, we can do the same as median, except that in the end we verify whether $g(0)-t$ is equal to $0$ or not. By doing so, we essentially order the secrets without knowing their individual values.
\end{Remark}

\subsection{Computing Non-Polynomial Functions of the Inputs}
\label{subsection:computingnonlinearfunctions}
We note that in our proposed comparison scheme in Section~\ref{section:proposedcomparison} and its applications discussed in this section, each of the nodes has a share of the result before the final revealing phase. This allows the nodes to compute many non-polynomial functions of the inputs including $\max$, $\min$, $\mathrm{median}$, $\mathrm{Rank}_t$, etc as discussed earlier. In addition, having shares of the result in the final revealing phase also enables nodes to continue the computations to compute other functions based on the comparison result. For example, in Section \ref{subsection:securemaxcomputation}, each node derives a share of the partition vector and $0$-coded vector of the maximum. The last entity of partition vector $\V_{\max}$ is equal to $\max$. So each node has a share of the maximum, and by using that, the nodes can compute any polynomial function of the maximum. Here, we present some such scenarios.

%For example, assume that we have $5$ nodes with private secrets $S_1, S_2, \ldots, S_5$. By using the proposed algorithm, we can compute $\max(S_1,S_2,S_4)^2+\mathrm{median}(S_1,S_2, \ldots, S_5)+S_1S_2$. 
\subsubsection{Outlier detection}

One of the main challenges in many applications such as federated learning (FL)  or model combination is to remove the outlier models.
In an FL system, some data owners train a local model and send the local parameters to a central server which is interested in updating a global model by using these local models. Usually in FL systems, there are some data owners whose models are not good enough potentially due to low computation power or low quality of private data. In \cite{chen2017distributed}, it is shown that only one outlier model can influence the final result and prevent convergence in FL systems.

As shown in \cite{chen2017distributed}, using the average of the local models as an aggregation function at the central server is not an outlier-resistance solution, as it can be influenced by the outliers. On the other hand, in \cite{pillutla2022robust,yin2018byzantine}, it is shown that $\mathrm{median}$ of the local models is robust against a certain number of outliers. Thus, one of the straightforward and effective algorithms to combat outliers is to remove the data which has large distance from the median \cite{guerraoui2018hidden}.

Assume that there are $K$ clients numbered $1,2,...,K$, such that each clients $i$ has access to a private value $x_i \in \mathbb{F}_q$, $\forall i \in [K]$. These private values can be gradient vectors in federated learning or private models in a model combination algorithm. When the central server is interested in detecting the outlier values to remove them in the aggregation phase through the median approach, it needs to compute the distances $d_i=|x_i - \mathrm{median}(\{x_j\}_{j=1}^K)|^2$, $\forall i \in [K]$, without revealing anything to the clients. To do that, the clients compute the shared values of $\mathrm{median}(\{x_j\}_{j=1}^K)$ in collaboration with each other as explained in Section \ref{subsection:median}. It is shown that the last entity of partition vector of the median is the exact value of median. So, each node $i$ has a share of the median in the form of Shamir secret sharing, and by using BGW scheme, they can derive the shares $(x_i - \mathrm{median}(\{x_j\}_{j=1}^K))(x_i - \mathrm{median}(\{x_j\}_{j=1}^K))$.
Finally, the clients reveal the shares of $d_1,d_2,...,d_K$ just to the central server, so that the central server can compute the exact values of the distances $d_1,d_2,...,d_K$. With that, the central server can remove the outliers during the aggregation phase, without seeing the individual models of the clients.

%\begin{Remark}
%In this scheme, we assume that a certain number of the nodes are semi-honest. But we can 
%\end{Remark}

\subsubsection{Minimax \& MaximinM Functions}
\label{subsubsection:minimax}

Another useful function that we can privately evaluate using the proposed algorithm is the minimax function, which has applications in decision theory, learning, and statistics. Let us assume that we have a total of $K$ user groups, each with $N_k$ users, $k \in [K]$. User $j$ of group $k$ holds a secret value $s_{j,k}$, $j \in [N_k]$. Users want to find out in which group the minimum secret input is the largest. %For example, groups can represent distinct coverage areas and the secret inputs of the users can represent the number of times each user interacted with their mobile phones. 
That is, the goal is to compute the following function in a completely private manner without revealing any information about the secrets. 
%Consider $n_1$ students from school $S_1$, $n_2$ students from school $S_2$, ...., and $n_K$ students from school $S_K$. Assume that all of the students are interested in finding the school with the best education system. The comparison parameter is that in which of the schools the minimum GPA is higher than the others. Let $s_{i,j}$ denote the $i$th student from school $S_j$, and $g_{i,j}$ is the GPA of $s_{i,j}$. Hence, the goal is to compute the following minimax function in a completely private manner without revealing any information about the GPAs.
\begin{align}\label{maxmin}
\max_{k\in[K]}\min_{j\in[N_k]}s_{j,k}
\end{align}

To reach the goal, in the first phase, by using Shamir secret sharing the users share their secret inputs. Then, all of the users run the secure $\min$ computation, based on Remark \ref{remark:mincomputatio}, over the inputs of the users from the same group. After this phase, each user has a share of partition vector and $0$-coded vector of the minimum of each group. Thus, they can run secure $\max$ computation to find the maximum of the results as well as the group index of the maximum as explained in Section \ref{subsection:secureauction}.
Hence, the users can compute (\ref{maxmin}) without any information leakage about their secret inputs. We note that the proposed secure comparison technique can be utilized to compute other such functions including minimax or max of median.

\section{Conclusion}
\label{section:conclusion}

In this paper, we initially introduced a novel coding method known as the $0$-coding vector, enabling nodes to perform comparisons with low complexity. Subsequently, we proposed a method for equality testing. By combining these techniques, we developed an algorithm capable of computing the maximum of multiple numbers. We then explore the integration of these schemes for applications in secure electronic auctions or the computation of non-linear functions, such as median, min/max functions, rank, etc.

\bibliographystyle{ieeetr}
\bibliography{References}

\end{document}